\documentclass[conference]{IEEEtran}
\IEEEoverridecommandlockouts
\usepackage{amsmath,amssymb,amsfonts}
\usepackage{graphicx}
\usepackage{textcomp}
\usepackage{xcolor}
\usepackage{amsthm}
\usepackage{bm}        
\usepackage[font=footnotesize]{caption}

\newtheorem{assumption}{Assumption}
\newtheorem{theorem}{Theorem}
\newtheorem{remark}{Remark}

\usepackage{booktabs}
\usepackage{multirow}
\usepackage{array}
\usepackage{adjustbox}

\usepackage[colorlinks=true, linkcolor=blue, citecolor=blue, urlcolor=blue, bookmarks=false, pdfpagemode=UseNone]{hyperref}

\usepackage{graphicx}

\usepackage{todonotes}
\usepackage{marginnote}
\usepackage{pdfcomment}

\usepackage{makecell} 
\usepackage{enumitem} 

\newcommand{\m}{\boldsymbol}

\usepackage[linesnumbered,ruled,boxed]{algorithm2e} 
\renewcommand{\KwData}{\textbf{Input: }}
\renewcommand{\KwResult}{\textbf{Output: }}
\usepackage[letterpaper, top=1in, left=0.75in, right=0.75in, bottom=0.75in]{geometry}

\def\BibTeX{{\rm B\kern-.05em{\sc i\kern-.025em b}\kern-.08em
		T\kern-.1667em\lower.7ex\hbox{E}\ke
		rn-.125emX}}
\begin{document}
	\title{Dynamic Constraint Reconstruction Based Control Barrier Functions for Safety-Critical Control of High-Dimensional Manipulators: Extended Version}
	
	\author{
		Bingsheng Zhang$^{1}$,
		Shen Wang$^{1*}$,
		Qiang Wang$^{1}$,
		Muguo Du$^{2}$,
		Donghai Shi$^{2}$,
		Xiaofeng Tao$^{1}$
		\thanks{This work is supported by the National Natural Science Foundations of China under Grant 62203062, 
			and partly by the Fundamental Research Funds for the Central Universities under Grants 2242022k60006 and 2024RC08, and BUPT-Yaowu Tech cooperation. $^{*}$Corresponding author: Shen Wang (shen.wang@bupt.edu.cn). 
			
			$^{1}$Bingsheng Zhang, Shen Wang, Qiang Wang, and Xiaofeng Tao are with the National Engineering Research Center of Mobile Network Technologies, Beijing University of Posts and Telecommunications,
			Beijing 100876, China (e-mails: {zhangbingsheng, shen.wang, wangq, taoxf}@bupt.edu.cn).}%
		\thanks{$^{2}$Muguo Du and Donghai Shi are with the Yaowu (Shenzhen) Technology Co., Ltd., Shenzhen 518052, China (e-mails: {dumuguo, shidonghai}@yaowutech.com).}%
		
	}
	\maketitle

	\begin{abstract}
		Control barrier functions (CBFs) provide formal safety guarantees for constrained nonlinear systems, but their effectiveness relies on accurate system dynamics. In high-dimensional manipulators subject to unknown disturbances and model uncertainties, fixed safety constraints constructed from nominal dynamics may become inconsistent with the actual system behavior, leading to safety degradation or excessive conservatism. This paper proposes a dynamic constraint reconstruction based control barrier function (DCR-CBF) framework for safety-critical control of disturbed robotic manipulators. An extended state observer is employed to estimate lumped disturbances online, and the estimated disturbance is incorporated into high-order control barrier functions to reconstruct safety constraints according to the estimated true dynamics. To address estimation inaccuracies, a safety margin is introduced, and a sufficient condition is derived to guarantee forward invariance under bounded estimation errors. Simulation studies on a 4-DOF excavation manipulator demonstrate that the proposed DCR-CBF method achieves zero safety violation under strong unknown disturbances while significantly improving trajectory-tracking performance compared with standard and robust CBF methods.
	\end{abstract}

	\begin{IEEEkeywords}
		 Dynamic Constraint Reconstruction, High-Order Control Barrier Functions, Robust Safety Control, Disturbance Estimation,  Robotic Manipulators
	\end{IEEEkeywords}

	\section{Introduction}\label{sec:intro}
	
High-dimensional robotic manipulators, such as excavation robots and industrial robotic arms, are increasingly deployed in complex and unstructured environments~\cite{Zhang2026}. In these applications, the systems are required to operate under strict safety constraints, including joint-position and velocity limits, while being subjected to strong nonlinearities, external disturbances, and unmodeled dynamics. Ensuring safety under such uncertain operating conditions remains a fundamental challenge in robotic manipulation.

Control barrier functions (CBFs) have emerged as an effective framework for safety-critical control of constrained nonlinear systems~\cite{Wang2025,Yang2026}. In particular, high-order control barrier functions (HOCBFs) enable systematic handling of high-relative-degree safety constraints and have been widely applied in robotic systems~\cite{Lin2025}. Combined with optimization-based controllers such as model predictive control (MPC), HOCBFs provide a practical mechanism for real-time safety enforcement~\cite{Liu2025}.

Despite these advantages, most existing CBF-based approaches construct safety constraints based on nominal system dynamics, implicitly assuming model accuracy~\cite{Yang2026}. However, in high-dimensional manipulators subject to unknown disturbances and model uncertainties, the actual system dynamics may deviate significantly from the nominal model. As a result, the constructed safety constraints may become inconsistent with the true system behavior, which can lead to safety degradation, constraint violation, or excessive conservatism.

To mitigate this issue, robust CBF methods~\cite{Wang2023} incorporate disturbance bounds into the constraint design to ensure safety under worst-case conditions. While effective in guaranteeing safety, such approaches often introduce excessive conservatism, which significantly degrades control performance. Alternatively, learning-based methods~\cite{Xu2025,He2026} and disturbance-observer (DOB)-based approaches~\cite{arcos2025robust,Dong2026} have been integrated into CBF frameworks to improve robustness. However, these methods primarily focus on disturbance rejection or compensation while still relying on nominal-model-based constraint construction. Consequently, they do not fundamentally resolve the inconsistency between safety constraints and the actual system dynamics.

To reduce the dependence on accurate system models, the extended state observer (ESO) estimates the lumped disturbance as an augmented state, thereby providing stronger robustness against model uncertainties than conventional DOBs~\cite{Chen2023,Zhang2024ESO,Morton2025}. Existing studies have demonstrated the effectiveness of ESO in safety-critical control, including safety-margin adaptation based on disturbance-estimation convergence~\cite{wang2023eso} and compensation for delayed system responses through online estimation of dynamic errors~\cite{Zhang2024ESO}. Nevertheless, disturbance estimation remains inherently imperfect, and residual estimation errors may still introduce additional safety risks near the constraint boundary~\cite{CHENG2026106690,Zhou2025}.

Motivated by these observations, this paper addresses the safety control problem from the perspective of constraint consistency. Instead of relying on worst-case disturbance bounds, we propose a dynamic constraint reconstruction based control barrier function (DCR-CBF) framework, which reconstructs safety constraints online using disturbance estimation. Specifically, an extended state observer (ESO) is employed to estimate unknown disturbances and model uncertainties in real time, and the estimated disturbance is embedded into the HOCBF formulation for dynamic constraint reconstruction. In this way, the safety constraints evolve consistently with the actual system dynamics, thereby reducing conservatism while preserving safety, as illustrated in Fig.~\ref{fig:frame}.

	\begin{figure}[t]
		\centering
		\includegraphics[width=1\columnwidth]{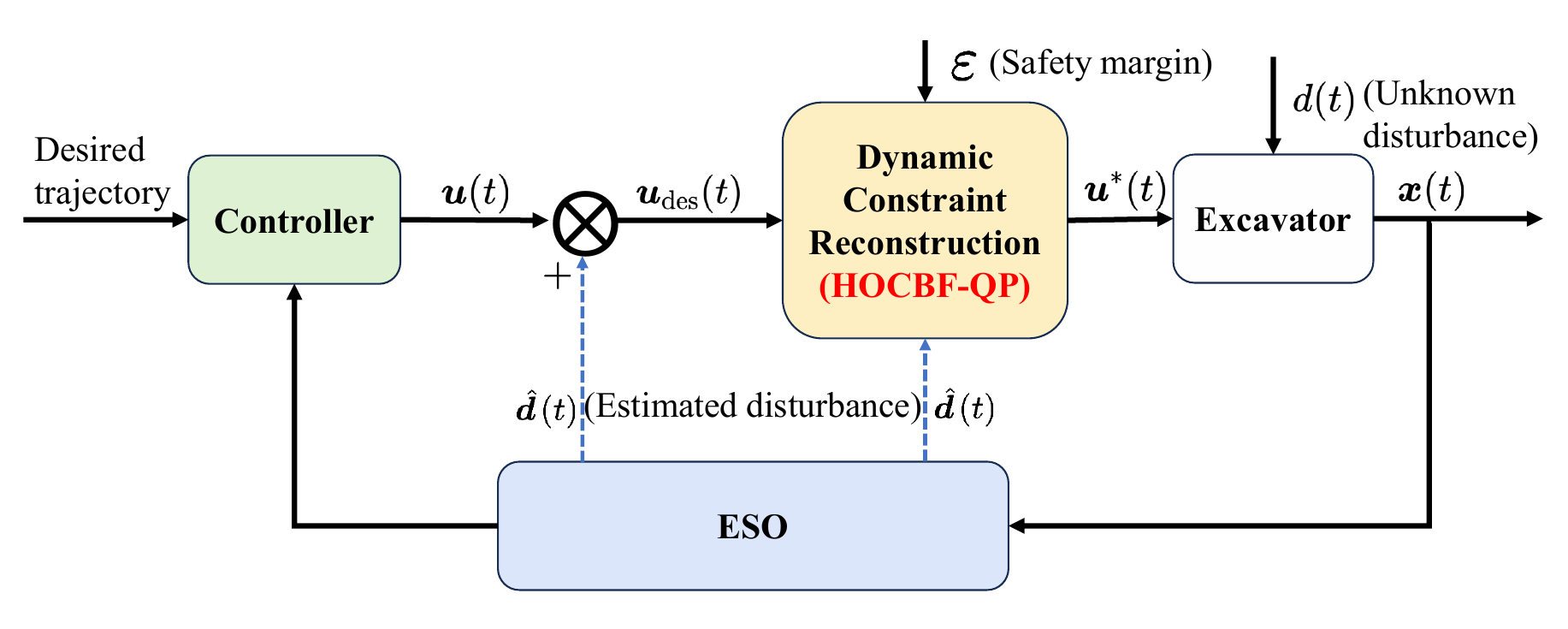}
		\caption{Architecture of the proposed DCR-CBF framework. The ESO estimates the lumped disturbance online and provides disturbance information for dynamic constraint reconstruction. The reconstructed HOCBF constraints are then used to generate safe control inputs $\m u^*$ for the excavation manipulator via solving a quadratic program (QP).}
		\label{fig:frame}
	\end{figure}
	
	Furthermore, to account for estimation inaccuracies, an error-aware safety margin is introduced. By explicitly considering the boundedness of the disturbance estimation error, a sufficient condition is established to guarantee forward invariance of the safe set under bounded uncertainties. 
	
	Based on the above observations, the main contributions of this paper are summarized as follows:
 
	\begin{itemize}
		
		\item 
		We identify constraint inconsistency under unknown disturbances as a fundamental limitation of conventional CBF-based safety control, and reformulate the safety problem from the perspective of dynamic constraint consistency.
		
		\item
		We develop a disturbance-aware HOCBF framework that integrates ESO-based disturbance estimation into online safety-constraint reconstruction, transforming fixed nominal constraints into adaptive real-time constraints and significantly reducing conservatism.
		
		\item
		We introduce a safety-margin design, establish a sufficient condition for forward invariance under bounded estimation errors, and experimentally validate the proposed framework on an excavation manipulator.
		
	\end{itemize}
	
	The remainder of this paper is organized as follows. Section~\ref{sec:ProblemFM} formulates the safety-control problem and analyzes the limitation of fixed-constraint CBF methods. Section~\ref{sec:dynamicCRF} presents the proposed dynamic constraint reconstruction framework. Section~\ref{sec:safetyCriticalCD} develops the safety-critical controller and provides the forward-invariance analysis. Section~\ref{sec:ExperimentalValidation} presents experimental validation. Section~\ref{sec:Conclusion} concludes this paper.
	
	\section{Problem Formulation and Motivation}\label{sec:ProblemFM}
	
	This section formulates the safety-control problem for high-dimensional robotic manipulators under unknown disturbances and highlights the fundamental limitation of conventional CBF-based methods. Specifically, we show that the main challenge lies not merely in control robustness, but in the inconsistency between fixed safety constraints and the actual disturbed system dynamics.

	\subsection{System Model and Safety Constraints}\label{sec:systemMSC}

	Consider an $n$-degree-of-freedom robotic manipulator described by
	\begin{equation}
		\m{M}(\m{q})\ddot{\m{q}}
		+
		\m{C}(\m{q},\dot{\m{q}})\dot{\m{q}}
		+
		\m{G}(\m{q})
		=
		\m{u}
		+
		\m{d}(t),
		\label{eq:real_dynamics}
	\end{equation}
	where $\m{q}\in\mathbb{R}^n$ and $\dot{\m{q}}\in\mathbb{R}^n$ denote the joint position and velocity vectors, respectively, $\m{u}\in\mathbb{R}^n$ is the control input, and $\m{d}(t)\in\mathbb{R}^n$ is a lumped disturbance.  The inertia matrix $\m{M}(\m{q})$, Coriolis term $\m{C}(\m{q},\dot{\m{q}})$, and gravity term $\m{G}(\m{q})$ are all state-dependent. To reduce the model's dependence on precise knowledge of these terms, an equivalent modeling strategy is adopted: all uncertainties arising from parameter variations and unmodeled dynamics, along with external disturbances, are absorbed into the disturbance term $\m{d}(t)$.
	
	Define the system state as $\m{x}=
	\begin{bmatrix}
		\m{q}^{\top},
		\dot{\m{q}}^{\top}
	\end{bmatrix}^{\top}.
	$ To ensure safe operation, each joint is required to satisfy the position constraints
	\begin{equation}
		\m{q}_{\min}
		\le
		\m{q}(t)
		\le
		\m{q}_{\max},
		\quad
		\forall t\ge0.
	\end{equation}
	
	Accordingly, the safe set is defined as
	\begin{equation}
		\mathcal{C}
		=
		\{
		\m{x}\in\mathbb{R}^{2n}
		\mid
		h(\m{x})\ge0
		\},
	\end{equation}
	where $h(\m{x})$ denotes the barrier function associated with joint constraints.
	
	The objective of this paper is to design a control law $\m{u}$ such that: (i) the safe set $\mathcal{C}$ remains forward invariant; (ii) the manipulator accurately tracks the desired trajectory; (iii) the safety constraints remain valid under unknown disturbances.

	\subsection{Limitation of Fixed-Constraint CBF}\label{sec:limitFCCBF}

	Conventional CBF methods construct safety constraints based on the nominal dynamics~\cite{Ames2017}
	\begin{equation}
		\m{M}(\m{q})\ddot{\m{q}}
		+
		\m{C}(\m{q},\dot{\m{q}})\dot{\m{q}}
		+
		\m{G}(\m{q})
		=
		\m{u},
		\label{eq:nominal_model}
	\end{equation}
	where the disturbance term $\m{d}(t)$ is neglected.
	
	For high-relative-degree constraints, the corresponding HOCBF condition~\cite{Lin2025} can generally be written as
	\begin{equation}
		\Psi
		(
		\m{x},
		\m{u};
		f_0
		)
		\ge0,
		\label{eq:nominal_constraint}
	\end{equation}
	where $f_0$ denotes the nominal dynamics.
	
	However, the actual system evolves according to Equation~\eqref{eq:real_dynamics}, which can be equivalently represented as $
	f_{\mathrm{real}} = f_0 + \Delta f(\m{d}).
	$ Consequently, the actual safety evolution becomes$
		\Psi
		(
		\m{x},
		\m{u};
		f_{\mathrm{real}}
		)
		\neq
		\Psi
		(
		\m{x},
		\m{u};
		f_0
		),
	$
	which implies that the fixed constraints constructed from nominal dynamics may become inconsistent with the actual system behavior.

	Such inconsistency introduces two fundamental issues: (i) constraint violation, where the nominally safe control input may fail to preserve safety under disturbances, and (ii)  excessive conservatism, where robust CBF methods guarantee safety by considering worst-case disturbance bounds but often sacrifice control performance. Therefore, the key challenge is not merely to design a more robust controller, but to make the safety constraints themselves consistent with the actual system dynamics. Motivated by this observation, this paper proposes an online \emph{dynamic constraint reconstruction} framework, in which safety constraints are continuously updated according to the estimated true system dynamics.

	\section{Dynamic Constraint Reconstruction Framework}\label{sec:dynamicCRF}
	
	This section presents the proposed dynamic constraint reconstruction framework. Instead of treating safety constraints as fixed objects derived from nominal dynamics, the proposed method reconstructs the constraint evolution online using disturbance estimation.

	\subsection{Core Idea of Constraint Reconstruction}\label{sec:coreIdeaCR}

	The central idea of this work is to reconstruct safety constraints according to the estimated true system dynamics. Conventional CBF methods enforce $ h(\m{x};f_0),$ where the barrier evolution is determined by the nominal dynamics. In contrast, we reconstruct the barrier dynamics as $
	h(\m{x};f_0)
	\rightarrow
	h(\m{x};\hat{f}), $ where $\hat{f}$ denotes the online estimated system dynamics. Since the disturbance is estimated in real time, the safety constraints become adaptive, i.e., $
	h(\m{x})
	\rightarrow
	h(\m{x},\hat{\m{d}}).
	$ In this way, the safety constraints evolve consistently with the actual system behavior, thereby reducing conservatism while preserving safety.

	\subsection{ESO-Based Disturbance Estimation}\label{sec:esoBaseDL}

	To reconstruct the safety constraints according to the disturbed system behavior, the lumped generalized disturbance in~\eqref{eq:real_dynamics} is estimated online. Let
	$
	\m{x}
	=
	[\m{q}^{\top},\dot{\m{q}}^{\top}]^{\top}
	\in\mathbb{R}^{2n}.
	$
	The manipulator dynamics can be written exactly in a state-dependent coefficient form as
	\begin{equation}
		\dot{\m{x}}
		=
		\m{A}(\m{x})\m{x}
		+
		\m{B}(\m{x})\m{u}
		+
		\m{r}(\m{x})
		+
		\m{E}(\m{x})\m{d}(t),
		\label{eq:state_space_disturbed}
	\end{equation}
	where
	\begin{equation}
		\begin{aligned}
			\m{A}(\m{x})
			&=
			\begin{bmatrix}
				\m{0}_{n} & \m{I}_{n}\\
				\m{0}_{n} & -\m{M}^{-1}(\m{q})\m{C}(\m{q},\dot{\m{q}})
			\end{bmatrix}
			,\\
			\m{B}(\m{x})
			&=
			\m{E}(\m{x})
			=
			\begin{bmatrix}
				\m{0}_{n}\\
				\m{M}^{-1}(\m{q})
			\end{bmatrix}
			,\\
			\m{r}(\m{x})
			&=
			\begin{bmatrix}
				\m{0}_{n\times 1}\\
				-\m{M}^{-1}(\m{q})\m{G}(\m{q})
			\end{bmatrix}
			.
		\end{aligned}
		\label{eq:state_dependent_matrices}
	\end{equation}

	Here, $\m{M}$, $\m{C}$, and $\m{G}$ denote the nominal model used by the controller; parameter variations and unmodeled dynamics are included in $\m{d}(t)$ together with the external disturbance. The dimensions of all matrices and blocks are explicitly indicated in \eqref{eq:state_dependent_matrices}. In particular, the disturbance distribution matrix is state dependent and is identical to the input distribution matrix because the lumped disturbance and the actuator torque enter through the same channel.
	
	The disturbance is introduced as the augmented state
	$
	\m{z}=\m{d}(t)
	$
	and
	$
	\m{x}_e=[\m{x}^{\top},\m{z}^{\top}]^{\top}\in\mathbb{R}^{3n}.
	$
	Defining $\m{v}(t)=\dot{\m{d}}(t)$, the augmented dynamics are
	\begin{equation}
		\dot{\m{x}}_e
		=
		\m{A}_e(\m{x})\m{x}_e
		+
		\m{B}_e(\m{x})\m{u}
		+
		\m{r}_e(\m{x})
		+
		\m{D}_e\m{v}(t),
		\label{eq:augmented_system}
	\end{equation}
	where
	\begin{equation}
		\begin{aligned}
			\m{A}_e(\m{x})
			&=
			\begin{bmatrix}
				\m{A}(\m{x}) & \m{E}(\m{x})\\
				\m{0}_{n\times 2n} & \m{0}_{n}
			\end{bmatrix},
			&
			\m{B}_e(\m{x})
			&=
			\begin{bmatrix}
				\m{B}(\m{x})\\
				\m{0}_{n}
			\end{bmatrix},\\
			\m{r}_e(\m{x})
			&=
			\begin{bmatrix}
				\m{r}(\m{x})\\
				\m{0}_{n\times 1}
			\end{bmatrix},
			&
			\m{D}_e
			&=
			\begin{bmatrix}
				\m{0}_{2n\times n}\\
				\m{I}_{n}
			\end{bmatrix}.
		\end{aligned}
		\label{eq:augmented_matrices}
	\end{equation}
	
	\begin{assumption}~\label{assumption1}
		The lumped disturbance $\m{d}(t)$ is continuously differentiable and its rate is bounded, i.e.,
		$
		\|\m{v}(t)\|=\|\dot{\m{d}}(t)\|\le\bar d,
		$
		where $\bar d>0$ is known. The manipulator state remains in a compact operating set $\mathcal X$ over which the matrices in \eqref{eq:state_dependent_matrices} are bounded.
	\end{assumption}
	
	For digital implementation, the state-dependent matrices are evaluated at the measured state at each sampling instant and held constant over one sampling interval. Applying a standard zero-order-hold discretization to \eqref{eq:augmented_system} gives
	\begin{equation}
		\m{x}_{e,k+1}
		=
		\m{A}_{d,k}\m{x}_{e,k}
		+
		\m{B}_{d,k}\m{u}_k
		+
		\m{r}_{d,k}
		+
		\m{w}_k,
		\label{eq:discrete_augmented}
	\end{equation}
	where $\m{A}_{d,k}$, $\m{B}_{d,k}$, and $\m{r}_{d,k}$ are the discrete counterparts of $\m{A}_e(\m{x}_k)$, $\m{B}_e(\m{x}_k)$, and $\m{r}_e(\m{x}_k)$, respectively. The bounded term $\m{w}_k$ collects the disturbance variation and the residual introduced by freezing the state-dependent matrices over one sampling interval.
	
	The joint-position vector is used as the ESO output, while both joint position and velocity remain available to the safety controller. Thus,
	\begin{equation}
		\m{y}_k
		=
		\m{q}_k
		=
		\m{C}_e\m{x}_{e,k},
		\qquad
		\m{C}_e
		=
		\begin{bmatrix}
			\m{I}_{n} & \m{0}_{n\times 2n}
		\end{bmatrix}
		\in\mathbb{R}^{n\times 3n}.
		\label{eq:measurement_model}
	\end{equation}
	Based on \eqref{eq:discrete_augmented}, the discrete-time ESO is constructed as
	\begin{equation}
		\hat{\m{x}}_{e,k+1}
		=
		\m{A}_{d,k}\hat{\m{x}}_{e,k}
		+
		\m{B}_{d,k}\m{u}_k
		+
		\m{r}_{d,k}
		+
		\m{L}_k
		\left(
		\m{y}_k-\m{C}_e\hat{\m{x}}_{e,k}
		\right),
		\label{eq:eso_discrete}
	\end{equation}
	where $\m{L}_k\in\mathbb{R}^{3n\times n}$ is the observer gain, selected according to the standard ESO gain-design procedure in~\cite{Chen2023}.
	The disturbance estimate is extracted from the augmented state as
	\begin{equation}
		\hat{\m{d}}_k
		=
		\begin{bmatrix}
			\m{0}_{n\times 2n} & \m{I}_{n}
		\end{bmatrix}
		\hat{\m{x}}_{e,k}.
		\label{eq:disturbance_extraction}
	\end{equation}
	Under Assumption~\ref{assumption1} and a stable observer-gain selection, the disturbance-estimation error remains uniformly ultimately bounded. The estimate $\hat{\m{d}}_k$ is subsequently used to reconstruct the HOCBF constraints.

	\subsection{Reconstruction of High-Order Safety Constraints}\label{sec:reconstructionHOSC}

	Based on the online disturbance estimation obtained from the ESO, the nominal safety constraints can be reconstructed according to the estimated true system dynamics. For robotic manipulators, the joint-position constraints are defined as
	$
	\m{q}_{\min}
	\le
	\m{q}(t)
	\le
	\m{q}_{\max}.
	$ Consider the upper-bound constraint. The corresponding barrier function is defined as
	\begin{equation}
		h(\m{x})
		=
		\m{q}_{\max}
		-
		\m{q}.
		\label{eq:joint_barrier}
	\end{equation}
	All vector inequalities in the following are interpreted componentwise.
	
	Since the control input does not explicitly appear in $\dot h(\m{x})$, the barrier function has relative degree two with respect to the control input. Under the nominal manipulator dynamics \eqref{eq:nominal_model}, the joint acceleration is
	\begin{equation}
		\ddot{\m{q}}
		=
		\m{M}^{-1}(\m{q})
		\Bigl(
		\m{u}
		-
		\m{C}(\m{q},\dot{\m{q}})\dot{\m{q}}
		-
		\m{G}(\m{q})
		\Bigr).
		\label{eq:nominal_acc}
	\end{equation}
	
	Let $\m{\lambda}_1,\m{\lambda}_2\in\mathbb R_{>0}^{n}$ denote the joint-wise gains of the two recursive HOCBF layers. The conventional HOCBF condition is~\cite{wang2023eso}
	\begin{equation}
		\ddot{h}
		+
		(\m{\lambda}_1+\m{\lambda}_2)\odot\dot{h}
		+
		(\m{\lambda}_1\odot\m{\lambda}_2)\odot h
		\ge0,
		\label{eq:nominal_hocbf}
	\end{equation}
	where the $\odot$ denotes the Hadamard product.
	
	Substituting \eqref{eq:joint_barrier} and \eqref{eq:nominal_acc} into \eqref{eq:nominal_hocbf} yields the nominal constraint
	$
	\m{M}^{-1}(\m{q})\m{u}
	\le
	\m{\eta}_0^{+}(\m{q},\dot{\m{q}}),
	$
	where $\m{\eta}_0^{+}(\m{q},\dot{\m{q}})$ denotes the nominal nonlinear term,  defined explicitly as follows 
	\begin{equation*}
		\begin{aligned}
			\m{\eta}_0^{+}(\m{q},\dot{\m{q}})
			={}&
			\m{M}^{-1}(\m{q})
			\bigl(\m{C}(\m{q},\dot{\m{q}})\dot{\m{q}}+\m{G}(\m{q})\bigr)
			\\
			&-(\m{\lambda}_1+\m{\lambda}_2)\odot\dot{\m{q}}
			+(\m{\lambda}_1\odot\m{\lambda}_2)\odot(\m{q}_{\max}-\m{q}).
		\end{aligned}
	\end{equation*}
	Note that $\m{M}^{-1}(\m{q})$ may have negative entries, so the inequality cannot be manipulated by multiplying both sides by $\m{M}$.
	
	However, under the actual disturbed dynamics \eqref{eq:real_dynamics}, the true joint acceleration becomes
	\begin{equation}
		\ddot{\m{q}}
		=
		\m{M}^{-1}(\m{q})
		\Bigl(
		\m{u}
		+
		\m{d}(t)
		-
		\m{C}(\m{q},\dot{\m{q}})\dot{\m{q}}
		-
		\m{G}(\m{q})
		\Bigr),
	\end{equation}
	which generally differs from the nominal model \eqref{eq:nominal_acc}. To make the barrier evolution consistent with the actual system behavior, the disturbance estimate $\hat{\m{d}}(t)$ obtained from the ESO is incorporated into the acceleration dynamics, yielding the reconstructed dynamics:
	\begin{equation}
		\ddot{\m{q}}
		=
		\m{M}^{-1}(\m{q})
		\Bigl(
		\m{u}
		+
		\hat{\m{d}}(t)
		-
		\m{C}(\m{q},\dot{\m{q}})\dot{\m{q}}
		-
		\m{G}(\m{q})
		\Bigr).
		\label{eq:reconstructed_acc}
	\end{equation}
	
	Substituting \eqref{eq:reconstructed_acc} into the HOCBF condition \eqref{eq:nominal_hocbf}, the reconstructed safety constraint is obtained as~\cite{Zhang2024ESO}
	\begin{equation}
		\m{M}^{-1}(\m{q})\m{u}
		\le
		\m{\eta}_0^{+}(\m{q},\dot{\m{q}})
		-
		\m{M}^{-1}(\m{q})\hat{\m{d}}(t).
		\label{eq:reconstructed_constraint}
	\end{equation}
	For the lower-bound barrier $h(\m{x})=\m{q}-\m{q}_{\min}$, the same derivation gives
	\begin{equation}
		\m{M}^{-1}(\m{q})\m{u}
		\ge
		\m{\eta}_0^{-}(\m{q},\dot{\m{q}})
		-
		\m{M}^{-1}(\m{q})\hat{\m{d}}(t),
		\label{eq:reconstructed_lower_constraint}
	\end{equation}
	where
	\begin{equation*}
		\begin{aligned}
		\m{\eta}_0^{-}(\m{q},\dot{\m{q}})
		={}&
		\m{M}^{-1}(\m{q})
		\bigl(\m{C}(\m{q},\dot{\m{q}})\dot{\m{q}}+\m{G}(\m{q})\bigr)
		\\
		&-(\m{\lambda}_1+\m{\lambda}_2)\odot\dot{\m{q}}
		-(\m{\lambda}_1\odot\m{\lambda}_2)\odot(\m{q}-\m{q}_{\min}).
		\end{aligned}
	\end{equation*}
	The upper and lower reconstructed constraints jointly define $\mathcal U_{\mathrm{safe}}(t)$ in the QP, whereas $\m{u}_{\min}$ and $\m{u}_{\max}$ remain the physical actuator limits.
	
	Compared with the conventional formulation $
	\m{M}^{-1}(\m{q})\m{u}
	\le
	\m{\eta}_0^{+}(\m{q},\dot{\m{q}})
	$, the disturbance estimate is directly embedded into the second-order Lie derivative of the barrier dynamics.
	
	Consequently, the HOCBF constraints are transformed from fixed nominal constraints into adaptive real-time constraints $
	h(\m{x};f_0)
	\rightarrow
	h(\m{x};\hat f).
	$ Therefore, the safety constraints evolve according to the estimated true system dynamics rather than fixed nominal dynamics, thereby significantly reducing the conservatism introduced by worst-case robust formulations.

	\section{Safety-Critical Controller Design}\label{sec:safetyCriticalCD}
	
	Based on the reconstructed HOCBF constraints derived in Section~\ref{sec:dynamicCRF}, this section develops the corresponding safety-critical controller and provides the safety guarantee under bounded disturbance-estimation errors.

	\subsection{QP-Based Safe Control with Reconstructed Constraints}\label{sec:QPbaseSCRC}

	Let $\m{u}_{\mathrm{des}}(t)\in\mathbb{R}^{n}$ denote the nominal tracking controller designed for task execution, shown in Fig.~\ref{fig:frame}, such as trajectory tracking or motion planning. To minimally modify the nominal control input while guaranteeing safety, the final control command is obtained by solving the following quadratic program (QP)~\cite{Ames2017}:
	\begin{equation}
		\begin{aligned}
			\m{u}^{*}(t)
			=
			\arg\min_{\m{u}}
			\quad
			&
			\|
			\m{u}
			-
			\m{u}_{\mathrm{des}}(t)
			\|^2
			\\
			\textnormal{s.t.}
			\quad
			&
			\m{u}
			\in
			\mathcal U_{\mathrm{safe}}(t), 
			\m{u}_{\min}
			\le
			\m{u}
			\le
			\m{u}_{\max},
		\end{aligned}
		\label{eq:qp_controller}
	\end{equation}
	where
	$\mathcal U_{\mathrm{safe}}(t)$
	denotes the admissible control set induced by the reconstructed HOCBF constraints, $\m{u}_{\min}$ and $\m{u}_{\max}$ denote the actuator limits.
	
	Compared with conventional HOCBF-QP controllers~\cite{Lin2025}, the optimization structure in \eqref{eq:qp_controller} remains unchanged. The essential difference lies in the online reconstruction of the feasible safety set according to the estimated true system dynamics. Therefore, the feasible set of the QP evolves adaptively according to the estimated true system dynamics rather than remaining fixed under the nominal model.

	\subsection{Safety Margin Design}\label{sec:redundantSMD}

	Although the ESO provides online disturbance estimation, the estimation error cannot be completely eliminated. Define the disturbance-estimation error as
	$
	\m{e}_d(t)=\m{d}(t)-\hat{\m{d}}(t).
	$
	To compensate for the resulting uncertainty at both joint boundaries, introduce a joint-wise safety-margin vector $\m{\varepsilon}\in\mathbb R_{>0}^{n}$ and contract the admissible joint-position interval as
	\begin{equation}
		\m{q}_{\min}+\m{\varepsilon}
		\le
		\m{q}
		\le
		\m{q}_{\max}-\m{\varepsilon}.
		\label{eq:margin_barrier}
	\end{equation}
	Thus, the $i$th joint maintains a distance $\varepsilon_i$ from both its lower and upper limits.

	Applying the reconstructed HOCBF conditions \eqref{eq:reconstructed_constraint} and \eqref{eq:reconstructed_lower_constraint} to the contracted interval gives
	\begin{subequations}
		\label{eq:robust_constraint} 
		\begin{align}
			\begin{aligned}
				\m{M}^{-1}(\m{q})\m{u}
				&\le
				\m{\eta}_0^{+}(\m{q},\dot{\m{q}})
				-\m{M}^{-1}(\m{q})\hat{\m{d}}(t)\\
				&\quad
				-(\m{\lambda}_1\odot\m{\lambda}_2)\odot\m{\varepsilon},
			\end{aligned} \label{eq:robust_constraint_a} \\[1ex]
			\begin{aligned}
				\m{M}^{-1}(\m{q})\m{u}
				&\ge
				\m{\eta}_0^{-}(\m{q},\dot{\m{q}})
				-\m{M}^{-1}(\m{q})\hat{\m{d}}(t)\\
				&\quad
				+(\m{\lambda}_1\odot\m{\lambda}_2)\odot\m{\varepsilon}.
			\end{aligned} \label{eq:robust_constraint_b}
		\end{align}
	\end{subequations}
	
	The first inequality tightens the upper-bound constraint downward, whereas the second tightens the lower-bound constraint upward. Consequently, both inequalities move toward the interior of the original joint interval.

	Therefore, the admissible control set in \eqref{eq:qp_controller} is updated as
	\begin{equation}
		\mathcal U_{\mathrm{safe}}(t)
		:=
		\left\{
		\m{u}\in\mathbb R^n
		\mid
		\eqref{eq:robust_constraint}
		\right\}.
	\end{equation}
	This symmetric boundary contraction compensates for possible deviations caused by disturbance-estimation errors near either joint limit.


	\subsection{Forward Invariance Analysis}\label{sec:fowardIA}

	To account for the initial ESO transient, the following condition is imposed.

	\begin{assumption}~\label{assumption2}
		During the initial ESO transient, the actual disturbance and the disturbance-estimation error do not destabilize the closed-loop system or cause violations of the state constraints. Hence, the system remains bounded and safe, allowing the ESO to converge normally. After the transient, there exists a constant $\bar e>0$ such that the disturbance-estimation error satisfies
		$
		\|\m e_d(t)\|_2\le\bar e.
		$
	\end{assumption}

	Define the contracted safe set as
	\begin{equation*}
		\mathcal{C}_{\varepsilon}
		=
		\left\{
		\m{x}\in\mathbb{R}^{2n}
		\mid
		\m{q}_{\min}+\m{\varepsilon}
		\le\m{q}\le
		\m{q}_{\max}-\m{\varepsilon}
		\right\}.
	\end{equation*}

	
	\begin{theorem}\label{thm:forward}
		Suppose that Assumptions~\ref{assumption1}--\ref{assumption2} hold. 
		Then, there exists a positive constant $c_h > 0$, depending on the system dynamics and the barrier function, such that if the safety margin satisfies
		\begin{equation}
			(\m{\lambda}_1\odot\m{\lambda}_2)\odot\m{\varepsilon}
			\ge
			c_h \bar e\,\m{1}_n,
			\label{eq:margin_condition}
		\end{equation}
		and the initial state satisfies
		$
		\m{x}(0)
		\in
		\mathcal{C}_{\varepsilon},
		$
		then the closed-loop system under the QP controller \eqref{eq:qp_controller} satisfies
		$
		\m{x}(t)
		\in
		\mathcal{C},
		\forall t\ge0.
		$
	\end{theorem}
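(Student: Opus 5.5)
We work joint by joint and with the upper-bound barrier only; the lower bound follows symmetrically. For joint $i$ set $h_i(\m{x})=q_{\max,i}-q_i$. Since $h_i$ has relative degree two, we use the recursive HOCBF functions $\psi_{0,i}=h_i$ and $\psi_{1,i}=\dot h_i+\lambda_{1,i}h_i$. The goal is to show $\dot\psi_{1,i}+\lambda_{2,i}\psi_{1,i}\ge 0$ along the \emph{true} closed-loop trajectories, whenever the applied input satisfies the tightened constraint \eqref{eq:robust_constraint_a}.

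\textbf{Step 1: true second derivative.} Along \eqref{eq:real_dynamics},
\[
\ddot h_i=-\bigl[\m{M}^{-1}(\m{q})(\m{u}+\m{d}-\m{C}\dot{\m{q}}-\m{G})\bigr]_i .
\]
Write $\m{d}=\hat{\m{d}}+\m{e}_d$. Then the true HOCBF expression $\Phi_i:=\ddot h_i+(\lambda_{1,i}+\lambda_{2,i})\dot h_i+\lambda_{1,i}\lambda_{2,i}h_i$ equals the reconstructed expression (built with $\hat{\m{d}}$) minus $[\m{M}^{-1}\m{e}_d]_i$.

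\textbf{Step 2: use the QP constraint.} Since $\m{u}^*$ satisfies \eqref{eq:robust_constraint_a}, the reconstructed expression is at least $\lambda_{1,i}\lambda_{2,i}\varepsilon_i$. Hence
\[
\Phi_i\ge\lambda_{1,i}\lambda_{2,i}\varepsilon_i-\bigl|[\m{M}^{-1}(\m{q})\m{e}_d]_i\bigr|.
\]
By Assumption~\ref{assumption1}, $\m{M}^{-1}$ is bounded on the compact set $\mathcal X$. Set $c_h:=\sup_{\m{x}\in\mathcal X}\|\m{M}^{-1}(\m{q})\|_2$, which depends only on the dynamics and the barrier. By Assumption~\ref{assumption2}, after the transient
\[
\bigl|[\m{M}^{-1}\m{e}_d]_i\bigr|\le c_h\bar e .
\]
Condition \eqref{eq:margin_condition} then gives $\Phi_i\ge 0$, i.e., $\dot\psi_{1,i}\ge-\lambda_{2,i}\psi_{1,i}$. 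During the transient, safety holds directly by Assumption~\ref{assumption2}.

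\textbf{Step 3: comparison lemma.} Applying the comparison lemma twice yields
\[
\psi_{1,i}(t)\ge e^{-\lambda_{2,i}t}\psi_{1,i}(0), \qquad h_i(t)\ge e^{-\lambda_{1,i}t}h_i(0)+\int_0^t e^{-\lambda_{1,i}(t-s)}\psi_{1,i}(s)\,ds .
\]
This requires $\psi_{1,i}(0)\ge0$, which is the standard HOCBF initial requirement. Since $\m{x}(0)\in\mathcal C_\varepsilon$ gives $h_i(0)\ge\varepsilon_i>0$, we will either argue that the margin absorbs a bounded negative $\psi_{1,i}(0)$, or note that this is implicitly required. Either way, $h_i(t)\ge0$ for all $t$, so $\m{x}(t)\in\mathcal C$.

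\textbf{Main obstacle.} Feasibility of the QP \eqref{eq:qp_controller} under the actuator limits is needed for $\m{u}^*$ to exist. Well-posedness of the closed loop (local Lipschitzness of the QP solution) must also be handled, along with the initial condition on $\dot h$. We will take feasibility as implicit in the statement, as is standard, and handle $\psi_{1,i}(0)$ by the argument above.
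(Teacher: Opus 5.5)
Your proposal is essentially the paper's proof: you split $\m{d}=\hat{\m{d}}+\m{e}_d$, read $\Phi_i\ge\lambda_{1,i}\lambda_{2,i}\varepsilon_i-c_h\bar e$ off the tightened constraint, absorb the error with the margin, and defer the transient to Assumption~\ref{assumption2} and the rest to the standard HOCBF feasibility and initial-condition requirements (your $c_h=\sup\|\m{M}^{-1}\|_2$ is valid but slightly looser than the paper's maximal row norm, and your comparison-lemma step only spells out what the paper cites). Drop the first branch of your Step~3 hedge: $\mathcal{C}_\varepsilon$ constrains only $\m{q}(0)$, so for any fixed $\varepsilon_i$ a large enough initial velocity toward the limit makes $\psi_{1,i}$ so negative that the enforced inequality admits trajectories with $h_i<0$; you must instead, as the paper implicitly does, assume $\psi_{1,i}\ge0$ at the end of the ESO transient (not merely at $t=0$), since that is where your comparison argument actually starts.
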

	
	\begin{remark}
		The detailed proof of Theorem~\ref{thm:forward} is provided in Appendix~A.
	\end{remark}

	For clarity, the complete implementation procedure of the proposed dynamic-constraint-reconstruction-based safety controller is summarized in Algorithm~\ref{alg:dcr_safe_control}.
	
	\begin{algorithm}[t]
		\caption{Dynamic Constraint Reconstruction Based Safety Control}
		\label{alg:dcr_safe_control}
		
		\KwData{
			Initial state $\m{x}(0)\in\mathcal C_\varepsilon$;
			observer state $\hat{\m{x}}_e(0)$;
			sampling period $\Delta t$;
			safety-margin vector $\m{\varepsilon}$ satisfying \eqref{eq:margin_condition}.
		}
		
		\KwResult{
			Safe control input $\m{u}^{*}(k)$.
		}
		
		Initialize ESO state $\hat{\m{x}}_e(0)$\;
		
		\For{$k=0,1,2,\ldots$}{
			
			Measure system state
			$\m{x}(k)=
			[\m{q}^{\top}(k),\dot{\m{q}}^{\top}(k)]^{\top}$\;
			
			Compute nominal tracking input
			$\m{u}_{\mathrm{des}}(k)$\;
			
			Update ESO according to \eqref{eq:eso_discrete}\;
			
			Extract disturbance estimate
			$\hat{\m{d}}(k)=\hat{\m{z}}(k)$\;
			
			Reconstruct system dynamics using
			\eqref{eq:reconstructed_acc} ~\cite{wang2023eso}\;
			
			Construct adaptive upper- and lower-bound HOCBF constraints using
			\eqref{eq:reconstructed_constraint} and
			\eqref{eq:reconstructed_lower_constraint}\;
			
			Introduce safety margin using
			\eqref{eq:margin_barrier}\;
			
			Update upper- and lower-bound robust safety constraints using
			\eqref{eq:robust_constraint}\;
			
			Construct admissible safe set
			$\mathcal U_{\mathrm{safe}}(k)$ from both constraints\;
			
			Solve the QP in \eqref{eq:qp_controller}
			to obtain $\m{u}^{*}(k)$\;
			
			Apply $\m{u}^{*}(k)$ to
			\eqref{eq:real_dynamics}\;
			
		}
		
	\end{algorithm}

	\section{Experimental Validation}
	\label{sec:ExperimentalValidation}
	
	This section experimentally validates the proposed dynamic-constraint-reconstruction-based safety controller developed in Algorithm~\ref{alg:dcr_safe_control}. According to the theoretical analysis in Sections~\ref{sec:dynamicCRF} and~\ref{sec:safetyCriticalCD}, the experiments are designed to verify the disturbance-estimation capability of the ESO in \eqref{eq:eso_discrete}, the forward-invariance guarantee established in Theorem~\ref{thm:forward}, and the conservatism-reduction capability of the reconstructed HOCBF constraints in \eqref{eq:robust_constraint}. Comparative studies with the standard CBF and robust CBF methods are conducted throughout all experiments.

	\subsection{Experimental Setup}
	
	Numerical simulations are conducted on a 4-DOF excavation robotic manipulator using the Robotics Toolbox~\cite{rtb}. The manipulator dynamics follow the disturbed model in \eqref{eq:real_dynamics}, where
	$\m q\in\mathbb R^4$
	denotes the joint-position vector.
	
	The unknown disturbance is selected as
	$
	\m d(t)
	=
	\m d_{\max}\cos(\omega t),
	$
	where
	$
	\m d_{\max}
	=
	[2.0,1.5,0.2,0.1]^\top\times10^6
	$
	and
	$\omega=0.02$. The desired trajectory is selected as
	$
	\m q_d(t)
	=
	\m A\sin(\omega_d t)+\m q_0,
	$
	where
	$
	\m A
	=
	[\pi,1.6,1.5,2.2]^\top
	$
	and
	$\omega_d=0.01$. The safety-margin vector in \eqref{eq:margin_barrier} is selected as
	$\m{\varepsilon}=[0.1,0.1,0.1,0.2]^\top$,
	which satisfies the condition in \eqref{eq:margin_condition}.
	For $n=4$, the augmented state comprises the joint position, joint velocity, and lumped disturbance, so the ESO is of order $3n=12$. The observer gain in \eqref{eq:eso_discrete} is calculated once from the discrete system matrices evaluated at the initial state and is subsequently used throughout the simulation. The resulting gain is denoted by $\m{L}_{k}$, whose scaled numerical value, reported to four decimal places, is
	\begin{equation*}
		\m{L}_{k}
		\simeq
		10^{7}
		\left[
		\begin{smallmatrix}
		 0.0000 &  0.0000 &  0.0000 &  0.0000\\
		 0.0000 &  0.0000 &  0.0000 &  0.0000\\
		 0.0000 &  0.0000 &  0.0000 &  0.0000\\
		 0.0000 &  0.0000 &  0.0000 &  0.0000\\
		 0.0000 &  0.0000 &  0.0000 &  0.0000\\
		 0.0000 &  0.0000 &  0.0000 &  0.0000\\
		 0.0000 &  0.0000 &  0.0000 &  0.0000\\
		 0.0000 &  0.0000 &  0.0000 &  0.0000\\
		-2.3477 &  0.0000 &  0.0000 &  0.0000\\
		 0.0000 & -2.2547 & -0.9988 & -1.3070\\
		 0.0000 & -0.2456 & -0.7199 & -0.4020\\
		 0.0000 &  0.0220 & -0.1847 & -0.1217
		\end{smallmatrix}
		\right]
		\in\mathbb{R}^{12\times 4}.
	\end{equation*}
	The ESO is implemented according to \eqref{eq:eso_discrete}, and the safe control input is generated online using Algorithm~\ref{alg:dcr_safe_control}. According to~\eqref{eq:nominal_hocbf}, the gain vectors are $\m{\lambda}_1=[20,20,20,20]^\top$ and $\m{\lambda}_2=[40,40,40,40]^\top$. The total simulation duration is $T=16$ s with sampling interval $\Delta t=0.01$ s. Three control methods are compared, namely the standard CBF, the robust CBF, and the proposed DCR-CBF.

	For a fair comparison, the standard CBF and robust CBF use the same nominal tracking controller, HOCBF gain vectors, safety margin, joint limits, actuator limits, QP settings, initial conditions, and sampling interval as the proposed DCR-CBF. The three methods differ only in their treatment of the unknown disturbance. The standard CBF constructs its constraint from the nominal zero-disturbance model. The robust CBF assumes the componentwise worst-case bound $|\m d(t)|\le\m d_{\max}$ and uses the maximum disturbance magnitude $\m d_{\max}$ to tighten its safety constraint at every sampling instant. In contrast, DCR-CBF reconstructs the constraint online using the current ESO estimate $\hat{\m d}(t)$.

	\subsection{Disturbance Estimation Performance}
	
	This experiment validates the disturbance-estimation property assumed in Assumption~\ref{assumption2}. Fig.~\ref{fig:u_tracking} presents the disturbance-estimation results obtained by the ESO in \eqref{eq:eso_discrete}. It can be observed that the estimated disturbance $\hat{\m d}(t)$ rapidly converges to the true disturbance after a short transient phase and accurately tracks the time-varying disturbance in all joint channels. These results experimentally verify that the disturbance-estimation error remains bounded, thereby supporting Assumption~\ref{assumption2} and providing reliable information for the dynamic constraint reconstruction in \eqref{eq:reconstructed_constraint}.
	
	\begin{figure*}[t]
		\centering
		
		\begin{minipage}{0.48\textwidth}
			\centering
			\includegraphics[width=0.95\linewidth]{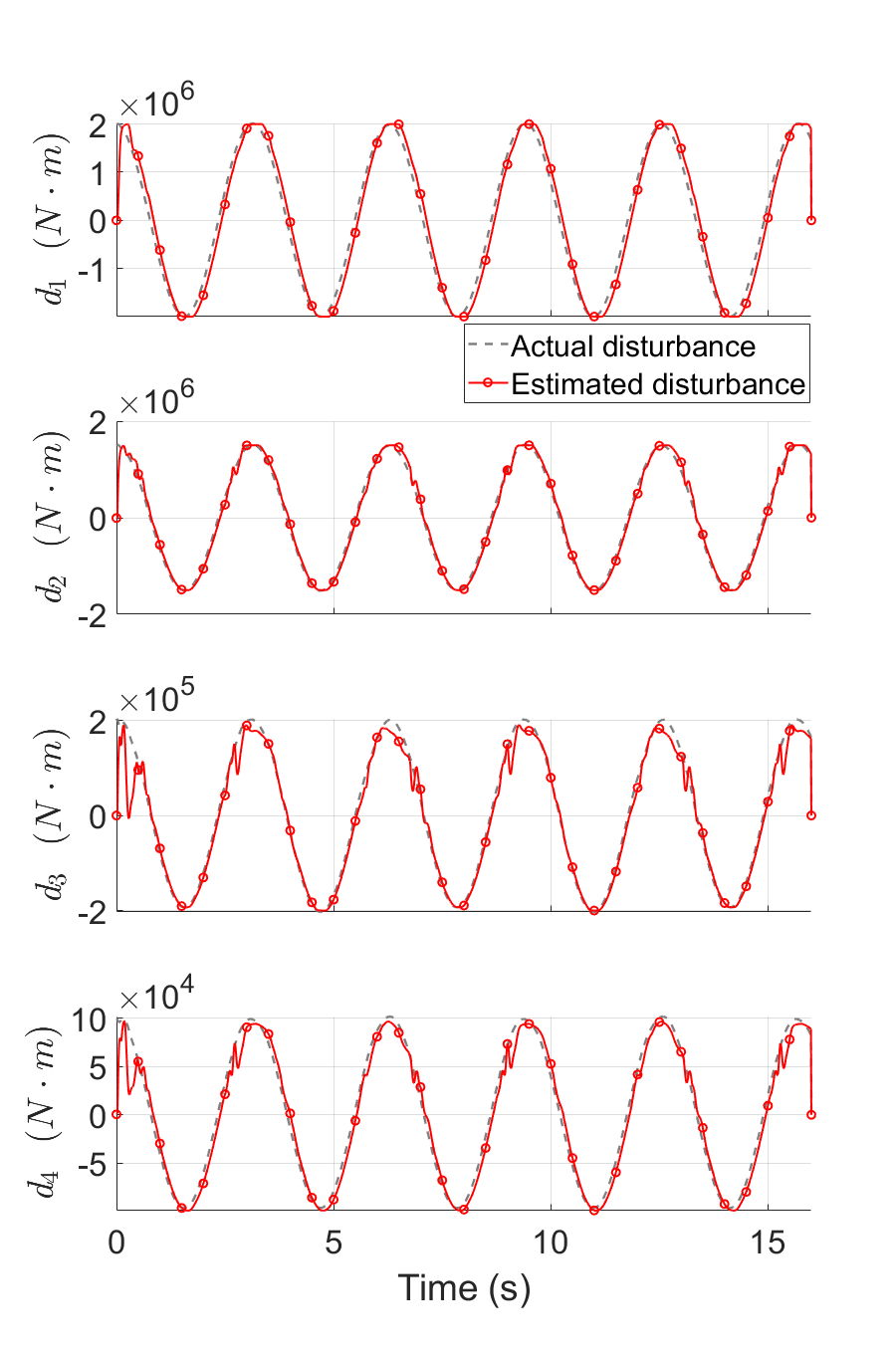}
			\caption{Online disturbance estimation of the ESO.}
			\label{fig:u_tracking}
		\end{minipage}
		\hfill  
		\begin{minipage}{0.48\textwidth}
			\centering
			\includegraphics[width=0.95\linewidth]{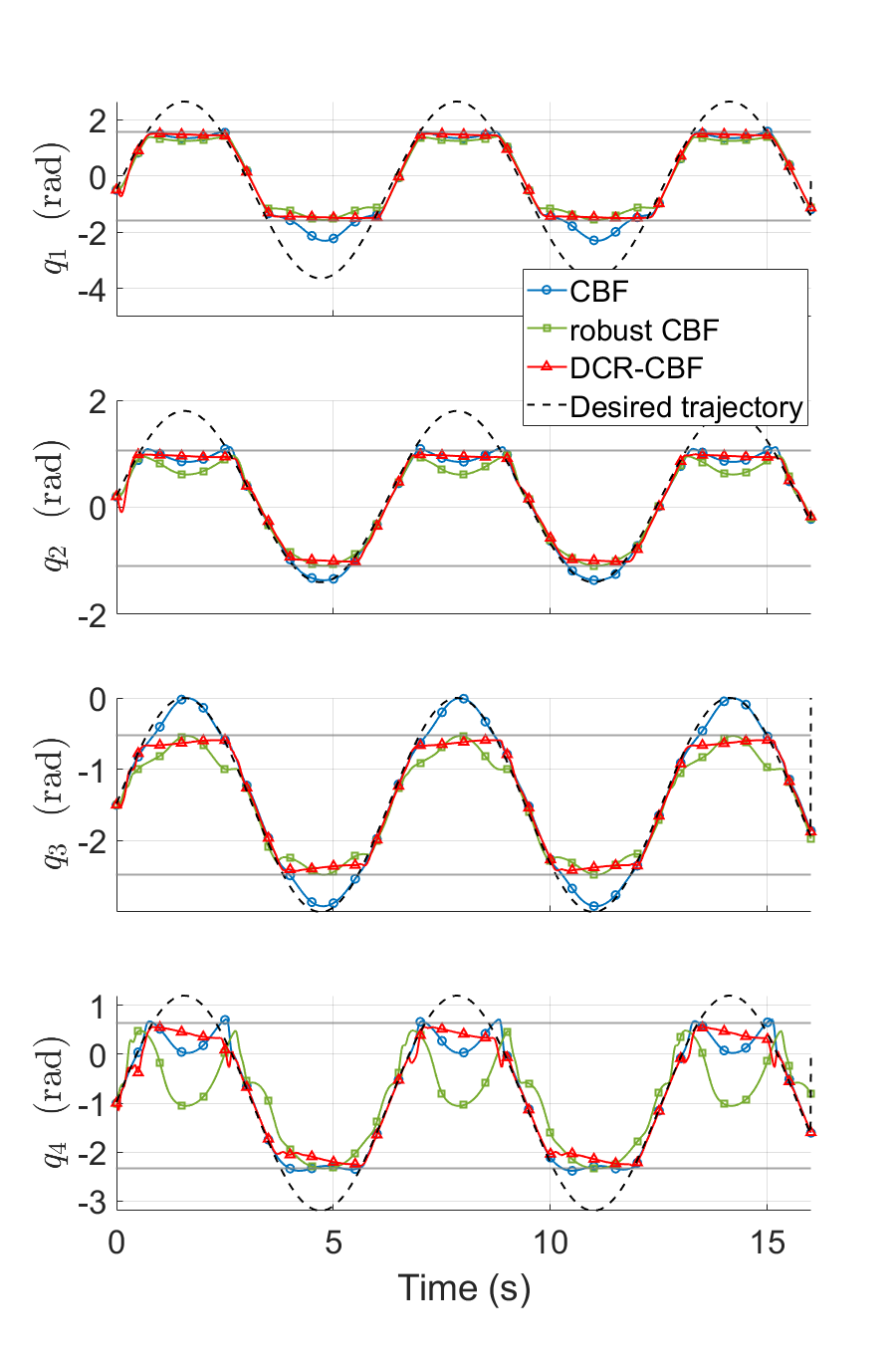}
			\caption{Joint trajectories under different methods.}
			\label{fig:tracking}
		\end{minipage}
		
	\end{figure*}

	\subsection{Safety Constraint Satisfaction}
	
	This experiment validates the forward-invariance result established in Theorem~\ref{thm:forward}.

	Fig.~\ref{fig:tracking} shows the joint trajectories together with the safety boundaries. The standard CBF exhibits multiple boundary violations under unknown disturbances, indicating that the fixed nominal constraints derived from \eqref{eq:nominal_constraint} become inconsistent with the actual disturbed dynamics. Although the robust CBF successfully preserves safety, its trajectories remain significantly away from the safety boundaries due to worst-case constraint tightening. In contrast, the proposed DCR-CBF maintains all trajectories strictly inside the safe region under the reconstructed safe set
	$\mathcal U_{\mathrm{safe}}(t)$
	defined in Section~\ref{sec:redundantSMD}, which experimentally validates the forward-invariance guarantee of Theorem~\ref{thm:forward}.

	\subsection{Tracking Performance and Conservatism Analysis}
	
	In addition to safety performance, trajectory-tracking performance is also evaluated. As shown in Fig.~\ref{fig:tracking}, the robust CBF significantly sacrifices tracking accuracy due to conservative worst-case constraint design. By contrast, the proposed method continuously updates the HOCBF constraints according to the disturbance estimate using \eqref{eq:robust_constraint}, which allows the QP controller in \eqref{eq:qp_controller} to generate less conservative control inputs. Consequently, the proposed method achieves significantly improved trajectory tracking while preserving strict safety guarantees.

	\subsection{Quantitative Comparison}
	
	To quantitatively evaluate both safety and tracking performance, the root-mean-square tracking error and safety-violation rate are defined as $
	\mathrm{RMSE}
	=
	\sqrt{
		\frac1T
		\int_0^T
		\|
		\m q(t)-\m q_d(t)
		\|^2dt
	},
	$
	and $
	\mathrm{Violation}
	=
	\frac1T
	\int_0^T
	\mathbb I
	(
	\m q(t)\notin
	[\m q_{\min},\m q_{\max}]
	)dt.
	$
	
		\begin{table}[t]
		\centering
		\caption{Quantitative Performance Comparison.}
		\label{tab:performance}
		
		\setlength{\tabcolsep}{4pt}
		
		\begin{tabular}{c|c|c|c|c|c}
			\hline
			\textit{Method}
			&
			\textit{Joint}
			&
			\textit{RMSE}
			&
			\textit{Viol.}
			&
			\textit{Avg RMSE}
			&
			\textit{Avg Viol.}
			\\
			\hline

			\multirow{4}{*}{CBF}
			
			& J1 & 0.2811 & 0.2055 & \multirow{4}{*}{0.2279} & \multirow{4}{*}{0.2748} \\
			
			& J2 & 0.1286 & 0.2461 & & \\
			
			& J3 & 0.2499 & 0.4909 & & \\
			
			& J4 & 0.2521 & 0.1568 & & \\
			
			\hline

			\multirow{4}{*}{Robust CBF}
			
			& J1 & 0.2702 & 0.0000 & \multirow{4}{*}{0.3741} & \multirow{4}{*}{0.0000} \\
			
			& J2 & 0.2123 & 0.0000 & & \\
			
			& J3 & 0.1887 & 0.0000 & & \\
			
			& J4 & 0.8251 & 0.0000 & & \\
			
			\hline

			\multirow{4}{*}{DCR-CBF}
			
			& J1 & \textbf{0.1664} & 0.0000 &
			\multirow{4}{*}{\textbf{0.1398}} &
			\multirow{4}{*}{0.0000}
			\\
			
			& J2 & \textbf{0.1075} & 0.0000 & & \\
			
			& J3 & \textbf{0.0989} & 0.0000 & & \\
			
			& J4 & \textbf{0.1863} & 0.0000 & & \\
			
			\hline
			
		\end{tabular}
		
	\end{table}
	
	Table~\ref{tab:performance} summarizes both joint-wise and overall performance metrics. The relatively large robust-CBF RMSE for J4 (0.8251) results from the comparatively large effective maximum disturbance bound in the J4 channel. Since the robust CBF applies this worst-case bound at every sampling instant, the admissible safe region for J4 is shifted downward and substantially contracted, forcing the joint trajectory away from its reference and thereby increasing the RMSE despite zero safety violations. The proposed DCR-CBF achieves zero safety violation on all joints while obtaining the lowest average RMSE among all methods. Compared with the robust CBF, the proposed method significantly reduces conservatism through online dynamic constraint reconstruction. Compared with the standard CBF, it successfully preserves forward invariance under unknown disturbances. These results comprehensively validate the effectiveness of the proposed framework.

	\section{Conclusion}
	\label{sec:Conclusion}
	
	This paper proposed a dynamic-constraint-reconstruction-based safety control framework for high-dimensional robotic manipulators subject to unknown disturbances and model uncertainties. Unlike conventional CBF methods that rely on fixed nominal constraints, the proposed method reconstructs high-order safety constraints online according to the estimated true system dynamics using an extended state observer. To address disturbance-estimation inaccuracies, a safety-margin design was further introduced, and a sufficient condition was established to guarantee forward invariance under bounded estimation errors. Simulation results on a 4-DOF excavation manipulator demonstrated that the proposed DCR-CBF method achieves zero safety violation under strong unknown disturbances while significantly improving trajectory-tracking performance compared with standard and robust CBF methods. Future work will focus on hardware implementation on hydraulic excavation platforms and the integration of learning-based disturbance prediction for more complex contact-rich environments.

	\appendices
	\section{Proof of Theorem~\ref{thm:forward}}
	\label{appendix:proof}

	\begin{proof}
		Consider the upper-bound barrier $h(\m{x})=\m{q}_{\max}-\m{q}$ in \eqref{eq:joint_barrier}. The true disturbance is decomposed as
		$
		\m{d}(t)=\hat{\m{d}}(t)+\m{e}_d(t),
		$
		where $\m{e}_d(t)$ is the disturbance-estimation error. Define the acceleration reconstructed using the ESO estimate as
		\begin{equation*}
			\ddot{\m{q}}_{\mathrm{rec}}
			=
			\m{M}^{-1}(\m{q})
			\Bigl(
			\m{u}+\hat{\m{d}}(t)
			-\m{C}(\m{q},\dot{\m{q}})\dot{\m{q}}
			-\m{G}(\m{q})
			\Bigr).
		\end{equation*}
		It follows from \eqref{eq:real_dynamics} that
		$
			\ddot{\m{q}}
			=
			\ddot{\m{q}}_{\mathrm{rec}}
			+
			\m{M}^{-1}(\m{q})\m{e}_d(t).
		$
		Because $\dot h=-\dot{\m{q}}$, the actual and reconstructed second derivatives of the upper-bound barrier satisfy
		\begin{equation*}
			\ddot h
			=
			\ddot h_{\mathrm{rec}}
			-
			\m{M}^{-1}(\m{q})\m{e}_d(t),
			\qquad
			\ddot h_{\mathrm{rec}}=-\ddot{\m{q}}_{\mathrm{rec}}.
		\end{equation*}
		Therefore, the disturbance-estimation error affects the reconstructed HOCBF directly as
		\begin{equation*}
			\begin{aligned}
				&\ddot h
				+(\m{\lambda}_1+\m{\lambda}_2)\odot\dot h
				+(\m{\lambda}_1\odot\m{\lambda}_2)\odot h\\
				={}&
				\ddot h_{\mathrm{rec}}
				+(\m{\lambda}_1+\m{\lambda}_2)\odot\dot h
				+(\m{\lambda}_1\odot\m{\lambda}_2)\odot h
				-\m{M}^{-1}(\m{q})\m{e}_d(t).
			\end{aligned}
		\end{equation*}
		Thus, the only discrepancy between the reconstructed and actual HOCBF conditions is the additive term $-\m{M}^{-1}(\m{q})\m{e}_d(t)$.

		After the ESO has entered the bounded-error regime in Assumption~\ref{assumption2}, $\|\m{e}_d(t)\|_2\le\bar e$. Define
		\begin{equation*}
			c_h
			:=
			\max_{i=1,\ldots,n}
			\sup_{\m{x}\in\mathcal X}
			\left\|
			\left[\m{M}^{-1}(\m{q})\right]_{i,:}
			\right\|_2.
		\end{equation*}
		The constant $c_h$ is finite under Assumption~\ref{assumption1}, and the $i$th component of the HOCBF perturbation satisfies
		\begin{equation*}
			\left|
			\left[\m{M}^{-1}(\m{q})\m{e}_d(t)\right]_i
			\right|
			\le
			c_h\bar e.
		\end{equation*}

		The upper-bound inequality in \eqref{eq:robust_constraint}, obtained from the boundary contraction in \eqref{eq:margin_barrier}, is equivalent to enforcing
		\begin{equation*}
			\ddot h_{\mathrm{rec}}
			+(\m{\lambda}_1+\m{\lambda}_2)\odot\dot h
			+(\m{\lambda}_1\odot\m{\lambda}_2)\odot h
			\ge
			(\m{\lambda}_1\odot\m{\lambda}_2)\odot\m{\varepsilon}.
		\end{equation*}
		Combining this safety margin with the estimation-error bound gives, for each joint,
		\begin{equation*}
			\begin{aligned}
				\ddot h_i
				&+(\lambda_{1,i}+\lambda_{2,i})\dot h_i
				+\lambda_{1,i}\lambda_{2,i}h_i\\
				&\ge
				\lambda_{1,i}\lambda_{2,i}\varepsilon_i-c_h\bar e.
			\end{aligned}
		\end{equation*}
		If
		\begin{equation*}
			(\m{\lambda}_1\odot\m{\lambda}_2)\odot\m{\varepsilon}
			\ge
			c_h\bar e\,\m{1}_n,
		\end{equation*}
		then the actual upper-bound HOCBF condition remains nonnegative despite the disturbance-estimation error. The lower-bound constraint follows from the same argument: its estimation-error term has the opposite sign but the same bound $c_h\bar e$, so the same safety-margin condition applies. Consequently, under the standard HOCBF feasibility and initial-condition requirements, both joint bounds are forward invariant after the ESO enters its bounded-error regime. Assumption~\ref{assumption2} supplies constraint satisfaction during the initial ESO transient, and hence
		$
		\m{x}(t)\in\mathcal C,\qquad\forall t\ge0.
		$
		
		This completes the proof.
	\end{proof}

	\bibliography{reference}

\end{document}